\documentclass[11pt,a4paper]{article}
\usepackage[utf8]{inputenc}
\usepackage{amsmath,amssymb,amsthm}
\usepackage[margin=1in]{geometry}
\usepackage{hyperref}
\hypersetup{colorlinks=true,linkcolor=blue,citecolor=blue,urlcolor=blue}

\newtheorem{theorem}{Theorem}
\newtheorem{proposition}[theorem]{Proposition}

\theoremstyle{remark}

\DeclareMathOperator{\Tr}{Tr}
\newcommand{\I}{\mathbf{1}}

\newcommand{\nuc}[1]{\left\lVert #1 \right\rVert_{*}}
\newcommand{\opn}[1]{\left\lVert #1 \right\rVert_{\infty}}

\title{A sharp norm inequality for entanglement-breaking channels}
\author{Eran Kopel\\[3pt]
  {\normalsize Tel Aviv University, Tel Aviv, Israel}\\[2pt]
  {\normalsize \href{mailto:erankopel@tauex.tau.ac.il}{erankopel@tauex.tau.ac.il}}\\[2pt]
  {\normalsize ORCID: \href{https://orcid.org/0000-0003-4657-8636}{0000-0003-4657-8636}}}
\date{\today}

\begin{document}
\maketitle

\begin{abstract}
For a channel $\Phi$ on $M_d$ written in the generalised Bloch parameterisation $r \mapsto Ar+c$, we prove that entanglement breaking implies
\[
\nuc{A}^2 + \tfrac{d(d-1)}{2}\,|c|^2 \;\le\; (d-1)^2 ,
\]
where $\nuc{\cdot}$ denotes the nuclear norm. The bound is attained in every dimension by the completely dephasing channel, and at $d=2$ by an explicit family with $c\neq0$. The qubit case reads $\nuc{A}^2+|c|^2\le1$ and extends to full rank a rank-two condition of Ruskai. The proof sharpens Ruskai's bound $\nuc{A}\le1$ by retaining the POVM completeness relation $\sum_k f_k=0$, which her argument discards; this recentres $A$ from a second moment into a cross-covariance, and is the entire content of the improvement.
\end{abstract}

\section{Introduction}

A completely positive trace-preserving map $\Phi$ is \emph{entanglement breaking} (EB) if $(\mathrm{id}\otimes\Phi)(\rho)$ is separable for every input, equivalently if $\Phi$ admits a Holevo form
\begin{equation}\label{eq:holevo}
\Phi(\rho) \;=\; \sum_k \Tr[E_k\rho]\,\sigma_k
\end{equation}
with $\{E_k\}$ a POVM and $\sigma_k$ states \cite{HSR2003}.

For qubits the EB set is well understood in the Bloch picture. Ruskai proved that EB implies $\nuc{A}\le1$, and that for \emph{unital} channels the converse holds, so $(A,0)$ is EB if and only if $\nuc{A}\le1$ \cite{Ruskai2003}. Lami and Giovannetti restate the unital criterion and record two further bounds: $\opn{A}^2+|c|^2\le1$, valid for every \emph{positive} trace-preserving map, and the sufficient condition $\nuc{A}+|c|\le1$ \cite{LamiGiovannetti}.

A gap sits between these. The necessary condition $\nuc{A}\le1$ ignores $c$ entirely; the quadratic bound uses $c$ but only the operator norm of $A$, and holds already under positivity, so it cannot detect entanglement breaking; and the linear condition is sufficient but not necessary. We close the gap with a condition that is necessary, quadratic in both quantities, uses the nuclear norm, and is sharp.

\paragraph{Relation to the separability criteria.} Read on the Choi state of the channel, the inequality of Theorem~\ref{thm:main} is the covariance-matrix criterion of Gittsovich \emph{et al.}\ \cite[Prop.~IV.2]{Gittsovich2008}---equivalently the criterion of Zhang \emph{et al.}\ \cite[Thm.~1]{Zhang2008}---specialised to channel space. Indeed, with $C=A^{\mathsf T}/d$ the cross-covariance block of the normalised Choi state, whose marginals are $\I/d$ and $\Phi(\I/d)=\I/d+\tfrac12\,c\cdot\lambda$ with $\Tr\Phi(\I/d)^2=1/d+|c|^2/2$, their bound $\nuc{C}^2\le(1-\Tr\rho_A^2)(1-\Tr\rho_B^2)$ is exactly Theorem~\ref{thm:main}, translation term included. What is added here is the sharpness and equality analysis of Section~4---attainment by the completely dephasing channel in every dimension and by an explicit $c\neq0$ family at $d=2$---the exact entanglement-breaking-index lower bound of Section~6, and an elementary self-contained proof in Holevo form.

Ruskai's second proof of $\nuc{A}\le1$---the product-representation argument she notes ``may be extendable to higher dimensions''---proceeds by Cauchy--Schwarz on the Holevo data and discards the POVM completeness relation $\sum_k f_k=0$. Retaining it is the whole improvement, and the step is dimension-agnostic, which is why the result below holds for all $d$.

\section{Setup}

Fix a Hermitian traceless basis $\{\lambda_i\}_{i=1}^{d^2-1}$ of $M_d$ normalised by $\Tr[\lambda_i\lambda_j]=2\delta_{ij}$. For $d=2$ this is the canonical parameterisation of \cite{FA1999,RSW2002}. Every state is
\[
\rho \;=\; \tfrac1d\I + \tfrac12\,r\!\cdot\!\lambda, \qquad |r|^2 \le \tfrac{2(d-1)}{d},
\]
with equality precisely for pure states. A channel acts affinely on the Bloch vector, $r\mapsto Ar+c$.

Write the Holevo data of \eqref{eq:holevo} as
\[
E_k = \tfrac{\alpha_k}{d}\I + \tfrac12 f_k\!\cdot\!\lambda, \qquad
\sigma_k = \tfrac1d\I + \tfrac12 s_k\!\cdot\!\lambda ,
\]
so $\Tr[E_k]=\alpha_k$. Then $\Tr[E_k\rho] = \alpha_k/d + \tfrac12 f_k\!\cdot\!r$, and comparing coefficients,
\begin{equation}\label{eq:bloch}
A \;=\; \tfrac12\sum_k s_k f_k^{\mathsf T}, \qquad c \;=\; \sum_k \tfrac{\alpha_k}{d}\,s_k .
\end{equation}
The POVM conditions give
\begin{equation}\label{eq:povm}
\sum_k E_k=\I \;\Longrightarrow\; \sum_k\alpha_k=d \ \text{ and }\ \sum_k f_k=0,
\qquad
E_k\succeq0 \;\Longrightarrow\; \Big|\tfrac{f_k}{\alpha_k}\Big|^2 \le \tfrac{2(d-1)}{d},
\end{equation}
the last because $E_k/\alpha_k$ is itself a state.

\section{The inequality}

\begin{theorem}\label{thm:main}
Let $\Phi$ be an entanglement-breaking channel on $M_d$ with Bloch data $(A,c)$. Then
\[
\nuc{A}^2 \;+\; \frac{d(d-1)}{2}\,|c|^2 \;\le\; (d-1)^2 .
\]
For $d=2$ this reads $\nuc{A}^2+|c|^2\le1$.
\end{theorem}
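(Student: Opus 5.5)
Write the Holevo data of \eqref{eq:holevo} in terms of probability weights and normalised POVM directions. Discard the terms with $\alpha_k=0$ (for those $E_k=0$, hence $f_k=0$, so they contribute nothing to \eqref{eq:bloch}). Set $p_k=\alpha_k/d$, so $p_k>0$ and $\sum_k p_k=1$ by \eqref{eq:povm}, and set $u_k=f_k/\alpha_k$, so that $|u_k|^2\le 2(d-1)/d$ by the positivity part of \eqref{eq:povm}. In these variables \eqref{eq:bloch} becomes
\[
A=\tfrac d2\sum_k p_k\, s_k u_k^{\mathsf T},\qquad c=\sum_k p_k s_k .
\]
The completeness relation $\sum_k f_k=0$ reads $\sum_k p_k u_k=0$. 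Hence we may subtract $c\,(\sum_k p_k u_k)^{\mathsf T}=0$ and obtain
\[
A=\tfrac d2\sum_k p_k\,(s_k-c)\,u_k^{\mathsf T}.
\]
This recentring is the one step that goes beyond Ruskai's argument. It replaces the second moment of the $s_k$ by their covariance, and the covariance is where $|c|^2$ enters.

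Next, bound the nuclear norm by duality. Since $A$ is a real matrix, $\nuc{A}=\max_O \Tr[O^{\mathsf T}A]$, where the maximum runs over real orthogonal $O$ (equivalently over contractions). For any such $O$,
\[
\Tr[O^{\mathsf T}A]=\tfrac d2\sum_k p_k\,(s_k-c)\cdot O u_k \le \tfrac d2\Big(\sum_k p_k|s_k-c|^2\Big)^{1/2}\Big(\sum_k p_k|Ou_k|^2\Big)^{1/2},
\]
by Cauchy--Schwarz in the weighted inner product. Because $O$ is orthogonal, the second factor is at most $\big(\sum_k p_k|u_k|^2\big)^{1/2}\le\big(2(d-1)/d\big)^{1/2}$. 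For the first factor, expand the variance:
\[
\sum_k p_k|s_k-c|^2=\sum_k p_k|s_k|^2-|c|^2\le \tfrac{2(d-1)}d-|c|^2 ,
\]
using $|s_k|^2\le 2(d-1)/d$, which holds because each $\sigma_k$ is a state.

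Squaring and combining the two factors gives
\[
\nuc{A}^2\le \tfrac{d^2}{4}\cdot\tfrac{2(d-1)}{d}\Big(\tfrac{2(d-1)}{d}-|c|^2\Big)=(d-1)^2-\tfrac{d(d-1)}2|c|^2 ,
\]
which is the claimed inequality. Setting $d=2$ gives $\nuc{A}^2+|c|^2\le 1$.

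There is no real obstacle here. The only points needing care are the duality formula for the nuclear norm in the real setting and the remark that zero-trace POVM elements can be dropped. The substantive idea is the recentring by $\sum_k p_k u_k=0$: without it, Cauchy--Schwarz only reproduces $\nuc{A}\le d-1$, which is Ruskai's bound in this normalisation.
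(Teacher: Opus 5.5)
Your proof is correct and is essentially the paper's argument: you use the same recentring $A=\tfrac d2\sum_k p_k(s_k-c)u_k^{\mathsf T}$ via $\sum_k p_k u_k=0$, and the same variance identity $\sum_k p_k|s_k-c|^2=\sum_k p_k|s_k|^2-|c|^2$. The only variation is that you bound $\nuc{A}$ by duality against an orthogonal $O$ and a single weighted Cauchy--Schwarz, which needs only the average bound $\sum_k p_k|u_k|^2\le 2(d-1)/d$; the paper instead uses the rank-one triangle inequality, the pointwise bound $|g_k|\le G_d$, and then Jensen, and your explicit removal of the $\alpha_k=0$ terms fills a small gap the paper leaves implicit.
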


\begin{proof}
Put $u_k := \alpha_k/d \ge 0$ and $g_k := \tfrac{d}{2}\,f_k/\alpha_k$. By \eqref{eq:povm}, $\sum_k u_k=1$, so $\{u_k\}$ is a probability distribution, and
\begin{equation}\label{eq:gbound}
|g_k| \;\le\; \frac{d}{2}\sqrt{\frac{2(d-1)}{d}} \;=\; \sqrt{\frac{d(d-1)}{2}} \;=:\; G_d .
\end{equation}
In this notation \eqref{eq:bloch} becomes $A=\sum_k u_k s_k g_k^{\mathsf T}$ and $c=\sum_k u_k s_k$, while completeness reads
\begin{equation}\label{eq:centred}
\sum_k f_k=0 \quad\Longleftrightarrow\quad \sum_k u_k g_k = 0 .
\end{equation}

\emph{Step 1 (recentring).} By \eqref{eq:centred}, $\sum_k u_k\,c\,g_k^{\mathsf T} = c\big(\sum_k u_k g_k\big)^{\mathsf T} = 0$, hence
\begin{equation}\label{eq:recentre}
A \;=\; \sum_k u_k\,(s_k-c)\,g_k^{\mathsf T}.
\end{equation}
This is the only use of completeness. It replaces the second moment $\mathbb{E}[s\,g^{\mathsf T}]$ by the cross-covariance $\mathbb{E}[(s-\mathbb{E}s)\,g^{\mathsf T}]$, and is what makes the constant attainable.

\emph{Step 2 (rank-one triangle inequality).} Since $\nuc{vw^{\mathsf T}}=|v|\,|w|$, applying the triangle inequality to \eqref{eq:recentre} and then \eqref{eq:gbound},
\[
\nuc{A} \;\le\; \sum_k u_k\,|s_k-c|\,|g_k| \;\le\; G_d\sum_k u_k\,|s_k-c| .
\]

\emph{Step 3 (Cauchy--Schwarz).} With respect to the weights $u_k$,
\[
\sum_k u_k|s_k-c| \;\le\; \Big(\sum_k u_k|s_k-c|^2\Big)^{1/2}.
\]

\emph{Step 4 (variance identity).} Since $c=\sum_k u_k s_k$,
\[
\sum_k u_k|s_k-c|^2 \;=\; \sum_k u_k|s_k|^2 - |c|^2 \;\le\; \frac{2(d-1)}{d}-|c|^2 .
\]
Combining,
\[
\nuc{A}^2 \;\le\; G_d^2\Big(\frac{2(d-1)}{d}-|c|^2\Big) \;=\; (d-1)^2-\frac{d(d-1)}{2}|c|^2 . \qedhere
\]
\end{proof}

\section{Sharpness}

Tracing Steps 2--4, equality requires simultaneously: (i) $|g_k|=G_d$ for every $k$ with $u_k>0$, i.e.\ the POVM elements are rank one; (ii) no nuclear-norm cancellation among the rank-one terms of \eqref{eq:recentre}; (iii) $|s_k-c|$ independent of $k$; (iv) $|s_k|^2 = 2(d-1)/d$, i.e.\ the outputs are pure.

\begin{proposition}\label{prop:dephasing}
The completely dephasing channel $\Delta(\rho)=\sum_{k=1}^d \langle k|\rho|k\rangle\,|k\rangle\!\langle k|$ attains the bound of Theorem~\ref{thm:main} in every dimension.
\end{proposition}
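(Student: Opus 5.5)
We have to show that $\Delta$ gives $\nuc{A}^2 + \tfrac{d(d-1)}{2}|c|^2 = (d-1)^2$. Since $\Delta$ is unital, $c=0$, so the claim reduces to $\nuc{A}=d-1$. The plan is to compute $A$ directly in the Bloch parameterisation, using a Gell-Mann-type basis adapted to the computational basis. Alternatively, we can read $A$ off the Holevo form \eqref{eq:bloch}.

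\emph{Step 1 (unitality).} We have $\Delta(\I/d)=\sum_k \tfrac1d|k\rangle\!\langle k|=\I/d$, so the image of $r=0$ is $0$ and hence $c=0$. Equivalently, in \eqref{eq:bloch} the weights are $u_k=1/d$ and the outputs $s_k$ are the Bloch vectors of the pure states $|k\rangle\!\langle k|$. These sum to zero because $\sum_k|k\rangle\!\langle k|=\I$.

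\emph{Step 2 (identify $A$ as a projection).} Split the traceless Hermitian operators into the diagonal ones, a subspace $D$ of real dimension $d-1$, and the off-diagonal ones. Choose the orthonormal basis $\{\lambda_i\}$ so that $d-1$ elements span $D$ (for example the generalised Gell-Mann diagonal matrices) and the rest are the symmetric and antisymmetric off-diagonal matrices. The channel $\Delta$ fixes every diagonal matrix and kills every off-diagonal matrix. Writing $\rho-\I/d=\tfrac12 r\cdot\lambda$, linearity shows that $A$ is the orthogonal projection onto the $(d-1)$-dimensional coordinate subspace for $D$. Its singular values are therefore $1$ with multiplicity $d-1$ and $0$ otherwise. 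The nuclear norm is basis-independent under orthogonal changes, because any two orthonormal bases with $\Tr[\lambda_i\lambda_j]=2\delta_{ij}$ differ by a real orthogonal matrix, which acts on $A$ by conjugation. Hence $\nuc{A}=d-1$ and $\nuc{A}^2=(d-1)^2$, which is equality in Theorem~\ref{thm:main}.

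\emph{Step 3 (cross-check).} As a consistency check, I would verify the equality conditions (i)--(iv) listed above. The POVM $E_k=|k\rangle\!\langle k|$ is rank one, the outputs are pure, and $|s_k-c|=|s_k|$ is constant in $k$. For (ii), the vectors $s_k=g_k\cdot\tfrac{2}{d}$ are parallel, with $f_k=s_k$ since $\alpha_k=1$. Then $A=\tfrac12\sum_k s_ks_k^{\mathsf T}$ is positive semidefinite, so no nuclear-norm cancellation occurs and the norm equals $\tfrac12\sum_k|s_k|^2=\tfrac12\cdot d\cdot\tfrac{2(d-1)}{d}=d-1$. This gives a second derivation that avoids choosing a basis.

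The main obstacle is justifying that the nuclear norm does not depend on the choice of basis $\{\lambda_i\}$, together with the bookkeeping of the factor $\tfrac12$ in the normalisation. The positive semidefinite route of Step~3 handles both cleanly, since there $\nuc{A}=\Tr A$, so I would present that computation as the primary argument.
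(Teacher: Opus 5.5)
Your Steps 1--2 are the paper's proof: unitality gives $c=0$, and $A$ is the orthogonal projection onto the $(d-1)$-dimensional diagonal sector, so $\nuc{A}=d-1$. The basis-independence you flag as the main obstacle is automatic. The map $r\mapsto\tfrac12 r\cdot\lambda$ is a scaled isometry onto the traceless Hermitian matrices, and $\Delta$ acts there as the Hilbert--Schmidt orthogonal projection onto the diagonal ones.

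The argument you want to make primary, Step 3, is a genuinely different computation, and it is correct. Instead of finding the spectrum of $A$, you read $A=\tfrac12\sum_k s_ks_k^{\mathsf T}$ off \eqref{eq:bloch} using $\alpha_k=1$ and $f_k=s_k$. Positive semidefiniteness then gives $\nuc{A}=\Tr A$, and you evaluate the trace from purity of the outputs. This buys two things the paper's route does not:
\begin{itemize}
\item It explains the equality structurally, as simultaneous tightness of Steps 2--4 in the proof of Theorem~\ref{thm:main}, since the nuclear norm is additive on positive semidefinite summands.
\item It proves more. Take any rank-one POVM $E_k=\alpha_k\sigma_k$ with pure $\sigma_k$, and the channel $\rho\mapsto\sum_k\Tr[E_k\rho]\,\sigma_k$. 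Then $f_k=\alpha_ks_k$, completeness gives $c=\tfrac1d\sum_k\alpha_ks_k=0$, and $A=\tfrac12\sum_k\alpha_ks_ks_k^{\mathsf T}\succeq0$ with $\Tr A=\tfrac12\cdot d\cdot\tfrac{2(d-1)}{d}=d-1$. So the whole family saturates the bound. For example, a SIC-POVM or any weighted 2-design gives the depolarizing channel $\rho\mapsto(\rho+\I)/(d+1)$ at its entanglement-breaking threshold.
\end{itemize}
The paper's route, in exchange, gives the full singular-value structure of $A$, not only its nuclear norm.

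One small omission: say explicitly that the Holevo form with $E_k=\sigma_k=|k\rangle\!\langle k|$ is what makes $\Delta$ entanglement breaking, so that Theorem~\ref{thm:main} applies to it at all. The paper's proof opens with this; your proposal uses it only implicitly.
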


\begin{proof}
$\Delta$ is measure-and-prepare with $E_k=\sigma_k=|k\rangle\!\langle k|$, hence EB, and unital, so $c=0$. Its Bloch matrix $A$ is the orthogonal projection onto the $(d-1)$-dimensional span of the diagonal generators, whose singular values are $1$ with multiplicity $d-1$. Thus $\nuc{A}=d-1$ and $\nuc{A}^2+\tfrac{d(d-1)}{2}|c|^2=(d-1)^2$.
\end{proof}

For $d=2$ the boundary is attained with $c\neq0$. Take a projective measurement along any axis $m$, so $g_1=m$, $g_2=-m$, $u_1=u_2=\tfrac12$, and any two pure outputs $s_1,s_2$. Then $c=\tfrac12(s_1+s_2)$, $|s_k-c|=\tfrac12|s_1-s_2|$ for both $k$, and $A=\tfrac12(s_1-s_2)m^{\mathsf T}$ is rank one, so
\[
\nuc{A}^2+|c|^2 \;=\; \tfrac14\big(|s_1-s_2|^2+|s_1+s_2|^2\big) \;=\; \tfrac14\big(2|s_1|^2+2|s_2|^2\big) \;=\; 1 .
\]
Whether equality with $c\neq0$ is achievable for $d\ge3$ is not settled here; condition (ii) becomes restrictive as the number of rank-one terms grows.

\section{The qubit case}
\label{sec:qubit}

At $d=2$, Theorem~\ref{thm:main} reads $\nuc{A}^2+|c|^2\le1$, and three comparisons locate it.

\paragraph{Extension of a rank-two result.} Consider \emph{planar} qubit channels, $\lambda_3=0$ in the canonical parameterisation of \cite{FA1999,RSW2002}. There the diagonal complete-positivity condition is
\[
(|\lambda_1|+|\lambda_2|)^2 + |t|^2 \;\le\; 1
\]
\cite[eq.~(23)]{Ruskai2003}; equivalently, specialising the inequalities following eq.~(39) of \cite{RSW2002} to $\lambda_3=0$ gives $(\lambda_1\pm\lambda_2)^2\le 1-|t|^2$. Since $\lambda_3=0$ gives $\nuc{A}=|\lambda_1|+|\lambda_2|$ and $|c|=|t|$, this is exactly Theorem~\ref{thm:main} restricted to rank two: on the planar slice our bound recovers the known diagonal condition. That condition is necessary but not sufficient for complete positivity: by \cite[Cor.\ 2]{RSW2002} the determinant condition is also required, and it is implied by the above only when $t_1=t_2=0$. Moreover every planar channel is entanglement breaking \cite[Sec.~5]{Ruskai2003}, so on this slice entanglement breaking and complete positivity coincide, and the inequality is a necessary condition for both. Our sampling is consistent with each statement: over $1.9\times10^4$ completely positive planar channels, none violated the inequality and none failed positivity of the Choi matrix under partial transpose, which for qubits is equivalent to separability. We make no claim beyond the planar slice.

\paragraph{Position relative to \cite{LamiGiovannetti}.} The bound $\opn{A}^2+|c|^2\le1$ is necessary for all positive maps and weaker, since $\opn{A}\le\nuc{A}$. The bound $\nuc{A}+|c|\le1$ is sufficient and stronger, since it implies Theorem~\ref{thm:main}. Both inclusions are strict: of $6000$ random Holevo-form EB channels sampled as in Section~\ref{sec:num}, $94$ violate the sufficient condition while satisfying Theorem~\ref{thm:main}.

\paragraph{Credit.} The half-statement $\nuc{A}\le1$ is Ruskai's \cite{Ruskai2003}, and Steps 2--3 reproduce her product-representation argument. Step 1 is the addition.

\paragraph{Consistency.} Amplitude damping, $A=\mathrm{diag}(\alpha,\alpha,\alpha^2)$, $c=(0,0,1-\alpha^2)$, gives $\nuc{A}^2+|c|^2 = 1+2\alpha^2+O(\alpha^3)$, exceeding $1$ for every $\alpha\in(0,1]$ and equal to $1$ at $\alpha=0$---correctly excluding all amplitude damping channels except the point channel.

\section{Application: lower bounds on entanglement-breaking indices}

The entanglement-breaking index is $n_{\mathrm{EB}}(\Phi)=\min\{n:\Phi^n\in\mathrm{EB}\}$ \cite{LamiGiovannetti,RJP2018,CMW2019,Park2026}. Since Theorem~\ref{thm:main} is necessary, applying it to $\Phi^n$ gives a computable lower bound. For strictly contractive $A$ the composite Bloch data are closed form, $A^n$ and $c_n=(\I-A)^{-1}(\I-A^n)c$, so with
\[
\nu_2(A,c) \;:=\; \min\Big\{n\ge1 : \nuc{A^n}^2+\tfrac{d(d-1)}{2}|c_n|^2 \le (d-1)^2\Big\}
\]
one has $n_{\mathrm{EB}}(\Phi)\ge\nu_2(A,c)$. For unital qubit channels $c=0$ and $\nu_2$ reduces to $\min\{n:\nuc{A^n}\le1\}$, where by Ruskai's converse the inequality is an equality; the bound is therefore exactly tight there, and a strict improvement on the $c$-blind bound off the unital set.

\section{Numerical verification}\label{sec:num}

Channels were generated directly in Holevo form from random POVMs and output states, so that entanglement breaking holds by construction and no separability test is required. Over $7200$ channels in dimensions $d=2,3,4,5$ there were no violations of Theorem~\ref{thm:main}; the ratio of the largest observed left-hand side to $(d-1)^2$ was $1.0000$, $1.0000$, $0.9990$, $0.9959$ respectively, the shortfall at $d=4,5$ reflecting the vanishing probability of sampling the extremal configuration, which Proposition~\ref{prop:dephasing} exhibits explicitly. Each of Steps 1--4 was checked independently at $d=2$ over $6000$ channels with no failures, and the equality family of Section~4 reproduced $1$ to within $1.1\times10^{-15}$ over $2000$ trials. A separate check confirmed, in agreement with \cite[Sec.~5]{Ruskai2003}, that every completely positive planar ($\lambda_3=0$) qubit channel sampled is entanglement breaking, over $1.9\times10^4$ such channels, and that the rank-two constraint of Section~\ref{sec:qubit} is satisfied by every completely positive planar channel, for translations along the third axis and for general translations alike. Scripts accompany this note as ancillary files.

\section{Open questions}

\begin{enumerate}
\item Theorem~\ref{thm:main} is necessary and sharp but not sufficient. Determining the exact EB body between it and the sufficient condition of \cite{LamiGiovannetti} is open even at $d=2$.
\item Equality with $c\neq0$ for $d\ge3$; condition (ii) of Section~4 is the obstruction.
\item Is the coefficient $d(d-1)/2$ optimal? It combines two separately sharp bounds which may not be simultaneously saturable with $c\neq0$ in higher dimension.
\item A basis-free formulation, in terms of intrinsic correlation and shift components of the channel rather than a fixed normalisation of $\{\lambda_i\}$.
\end{enumerate}


\begin{thebibliography}{9}

\bibitem{HSR2003} M.~Horodecki, P.~W.~Shor and M.~B.~Ruskai, \emph{Entanglement breaking channels}, Rev. Math. Phys. \textbf{15} (2003) 629; \href{https://arxiv.org/abs/quant-ph/0302031}{arXiv:quant-ph/0302031}.

\bibitem{Ruskai2003} M.~B.~Ruskai, \emph{Qubit entanglement breaking channels}, Rev. Math. Phys. \textbf{15} (2003) 643; \href{https://arxiv.org/abs/quant-ph/0302032}{arXiv:quant-ph/0302032}.

\bibitem{LamiGiovannetti} L.~Lami and V.~Giovannetti, \emph{Entanglement-breaking indices}, J. Math. Phys. \textbf{56} (2015) 092201; \href{https://arxiv.org/abs/1411.2517}{arXiv:1411.2517}.

\bibitem{FA1999} A.~Fujiwara and P.~Algoet, \emph{One-to-one parametrization of quantum channels}, Phys. Rev. A \textbf{59} (1999) 3290.

\bibitem{RSW2002} M.~B.~Ruskai, S.~Szarek and E.~Werner, \emph{An analysis of completely positive trace-preserving maps on $M_2$}, Linear Algebra Appl. \textbf{347} (2002) 159; \href{https://arxiv.org/abs/quant-ph/0101003}{arXiv:quant-ph/0101003}.

\bibitem{RJP2018} M.~Rahaman, S.~Jaques and V.~I.~Paulsen, \emph{Eventually entanglement breaking maps}, J. Math. Phys. \textbf{59} (2018) 062201; \href{https://arxiv.org/abs/1801.05542}{arXiv:1801.05542}.

\bibitem{CMW2019} M.~Christandl, A.~M\"uller-Hermes and M.~M.~Wolf, \emph{When do composed maps become entanglement breaking?}, Ann. Henri Poincar\'e \textbf{20} (2019) 2295; \href{https://arxiv.org/abs/1807.01266}{arXiv:1807.01266}.

\bibitem{Park2026} S.-J.~Park, \emph{Every PPT channel has finite entanglement-breaking index}, \href{https://arxiv.org/abs/2608.13551}{arXiv:2608.13551} (2026).

\bibitem{Gittsovich2008} O.~Gittsovich, O.~G\"uhne, P.~Hyllus and J.~Eisert, \emph{Unifying several separability conditions using the covariance matrix criterion}, Phys. Rev. A \textbf{78} (2008) 052319; \href{https://arxiv.org/abs/0803.0757}{arXiv:0803.0757}.

\bibitem{Zhang2008} C.-J.~Zhang, Y.-S.~Zhang, S.~Zhang and G.-C.~Guo, \emph{Entanglement detection beyond the computable cross-norm or realignment criterion}, Phys. Rev. A \textbf{77} (2008) 060301(R); \href{https://arxiv.org/abs/0709.3766}{arXiv:0709.3766}.

\end{thebibliography}
\end{document}